\newtheorem{thm}{Theorem}[section]
\newtheorem{lem}{Lemma}[section]
\newcommand{\x}{\mathbf{x}}
\newcommand{\y}{\mathbf{y}}
\newcommand{\f}{\mathbf{f}}
\newcommand{\beq}{\begin{equation}}
\newcommand{\eeq}{\end{equation}}  
\newcommand{\eea}{\end{align}}
\newcommand{\bea}{\begin{align}}
\newcommand{\li}{\left<}
\newcommand{\ri}{\right>}
\newcommand{\ab}{\mathbf{a}}
\newcommand{\bb}{{\mathbf{b}}}
\newcommand{\cb}{{\mathbf{c}}}
\newcommand{\Rb}{\mathbf{R}}
\newcommand{\ib}{{\mathbf{i}}}
\newcommand{\g}{\mathbf{g}}
\newcommand{\F}{\mathbf{F}}
\newcommand{\X}{\mathbf{X}}
\newcommand{\Pro}{{\mathbb{P}}}
\newcommand{\s}{{\mathbf{s}}}
\newcommand{\z}{{\mathbf{z}}}
\newcommand{\A}{{\mathbf{A}}}
\newcommand{\V}{{\mathbf{V}}}
\newcommand{\M}{\mathbf{M}}
\newcommand{\order}[1]{\mathcal{O}\left(#1\right)}
\begin{document}
\title{Sparse Phase Retrieval: Convex Algorithms and Limitations}
\author{\begin{tabular}[t]{c@{\extracolsep{5em}}c@{\extracolsep{5em}}c} 
Kishore Jaganathan & Samet Oymak  & Babak Hassibi \end{tabular}\\
 \\
        Department of Electrical Engineering \\
       California Institute of Technology \\
       Pasadena, CA~~91125
}
\date{}
\maketitle

\begin{abstract}
We consider the problem of recovering signals from their power spectral density. This is a classical problem referred to in literature as the phase retrieval problem, and is of paramount importance in many fields of applied sciences. In general, additional prior information about the signal is required to guarantee unique recovery as the mapping from signals to power spectral density is not one-to-one. In this paper, we assume that the underlying signals are sparse. Recently, semidefinite programming (SDP) based approaches were explored by various researchers. Simulations of these algorithms strongly suggest that signals upto $o(\sqrt{n})$ sparsity can be recovered by this technique. In this work, we develop a tractable algorithm based on reweighted $l_1$-minimization that recovers a sparse signal from its power spectral density for significantly higher sparsities, which is unprecedented.

We discuss the square-root bottleneck of the existing convex algorithms and show that a $k$-sparse signal can be efficiently recovered using $\order{k^2logn}$ phaseless Fourier measurements. We also show that a $k$-sparse signal can be recovered using only $\order{k log n}$ phaseless measurements if we are allowed to design the measurement matrices.

%We discuss the limitations of the existing SDP algorithms, and provide a combinatorial algorithm which requires significantly fewer measurements to guarantee recovery.
\end{abstract}

\let\thefootnote\relax\footnotetext{This work was supported in part by the National Science Foundation under grants CCF-0729203, CNS-0932428 and CCF-1018927, by the Office of Naval Research under the MURI grant N00014-08-1-0747, and by Caltech's Lee Center for Advanced Networking.}

\begin{keywords}
Phase Retrieval, Semidefinite Programming (SDP), Reweighted $l_1$-minimization
\end{keywords}

\section{Introduction}
In many practical measurement systems, the power spectral density of the signal, i.e. the magnitude square of the Fourier transform, is the measurable quantity. Phase information of the Fourier transform is completely lost, because of which signal recovery is difficult. This problem occurs in many areas of engineering and applied physics, including X-ray crystallography \cite{millane}, astronomical imaging \cite{dainty},  microscopy, optics \cite{walther}, blind channel estimation and so on. 

Recovering a signal from its Fourier transform magnitude, or equivalently its autocorrelation, is known as phase retrieval. The mapping from signals to their Fourier transform magnitude is not one-to-one, and hence unique recovery is not possible in general. Additional measurements or prior information about the signal is required in order to uniquely recover the underlying signal. Constraints on the signal's values like non-negativity, bounds on the signal's support (locations where the value is non-zero), and more recently sparsity \cite{vetterli}, are commonly used as prior information. 

Considerable amount of research has been done over the last few decades (\cite{patt1, patt2}) and a wide range of heuristics have been proposed (see \cite{gerchberg}), a comprehensive survey of which can be found in \cite{fienup}. \cite{bauschke} provides a theoretical framework to understand the heuristics, which are in essence an alternating projection between a convex set and a non-convex set. Such methods often converge to a local minimum, hence drastically reducing the chances of successful signal recovery. 

Recently, the phase retrieval problem was recast as a semi-definite programming problem (see \cite{candespr,eldar} and \cite{kishore}). In \cite{candespr}, additional measurements with different illuminations, which is possible in an optical setup, are used to make unique recovery feasible. In \cite{eldar, kishore} and \cite{GESPAR}, the underlying signals are assumed to be sparse, which is a reasonable assumption in applications like X-ray crystallography, microscopy and astronomical imaging. 
 
Numerical simulations of the existing techniques based on SDP strongly suggest that signals upto $o(\sqrt{n})$ sparsity can be recovered with an arbitrarily high probability. This behavior for the phase retrieval problem was rigorously explained in \cite{kishore3, kishore2}. \cite{Samet, LiVor} consider the "generalized" phase retrieval problem and observe a similar behavior. In this work, we develop an algorithm based on the idea of reweighted $l_1$ minimization to solve the phase retrieval problem for significantly higher sparsities. We also provide certain theoretical guarantees and discuss the limitations of the SDP based techniques, and develop a combinatorial technique which requires far fewer measurements to guarantee recovery.

The remainder of the paper is organized as follows. In Section 2, we formulate the phase retrieval problem and recast it as a SDP problem. In Section 3, we discuss the limitations of the existing SDP-based techniques and develop an algorithm based on reweighted $l_1$ minimization in Section 4. In Section 5, we develop a measurement system using a combinatorial approach which requires far fewer measurements to guarantee recovery. Section 6 presents the results of the numerical simulations and concludes the paper.

\section{Problem Formulation}

Suppose $\x = (x_0, x_1, .... x_{n-1})$ is a real-valued discrete-time signal of length $n$ and sparsity $k$, where sparsity is defined as the number of non-zero entries.  Let $\y = (y_0,y_1,...y_{n-1})$ be its Fourier transform, i.e., 
\begin{equation}
\y=\F\x
\end{equation} 
where $\F$ is the $n \times n$ Discrete Fourier Transform (DFT) matrix. The phase retrieval problem can be mathematically stated as
\begin{align}
\label{PR}
\nonumber &\textrm{find} \quad \quad \quad  \quad \quad \x \\
& \textrm{subject to   } \quad \quad  |\y|=|\F\x| 
\end{align}

Since magnitude square of Fourier transform and autocorrelation are Fourier pairs, the phase retrieval problem can be reformulated as recovering signals from their autocorrelation, i.e., 
\begin{align}
& \textrm{find} \quad \quad  \quad \quad\hspace{5pt}  {\x} \\
\nonumber & \textrm{subject to}  \quad \quad  a_i =\sum_j{x_j x_{j+i}}  \quad  \quad  0 \leq i \leq n-1
\end{align}
where $\mathbf{a}=(a_0, a_1, ...., a_{n-1})$ is the autocorrelation of $\x$.

Observe that the operations of  time-shift, flipping and global sign-change do not affect the autocorrelation, because of which there is a trivial ambiguity. The signals resulting from these operations are considered equivalent, and in all the applications it is considered good enough if any equivalent signal is recovered.

The problem (\ref{PR}) is hard to solve due to non-convex constraints. We can relax the constraints into a set of convex constraints by embedding the problem in a higher dimensional space, a technique popularly known as lifting.  Note that (\ref{PR}) contains $n$ constraints of the form $|y_i|=|\f_i^T\x|$, where $\f_i$ is the $i^{th}$  column of $\F$. Squaring both sides, the constraints can be rewritten as $|y_i|^2=|\x^T\f_i^*\f_i^T\x|$. Suppose we define $\X=\x\x^T$, the problem can be recast in terms of $\X$ as
\begin{align}
\label{NCPR1}
\nonumber &\textrm{find} & &  \X \\
\nonumber & \textrm{subject to   } &  & |\y_i|^2=trace(\M_i\X) ~ ~ 0 \leq i \leq n-1\\
 & & & rank(\X)=1 \quad \& \quad \X \succcurlyeq 0
\end{align}
where $\M_i=\f_i^*\f_i^T$. 

In terms of the autocorrelation $\mathbf{a}$, the lifted problem can be formulated as

\begin{align}
\nonumber &\textrm{find} & &  \X \\
\nonumber & \textrm{subject to   } &  & \sum_{j=1}^n \X_{j,j+i}=a_i ~ ~ 0\leq i\leq n-1\\
 & & & rank(\X)=1 \quad \& \quad \X \succcurlyeq 0
\end{align}

\section{Background}

The program (\ref{NCPR1}) can be reformulated as a rank minimization problem as follows
\begin{align}
\label{NCPR2}
\nonumber &\textrm{minimize} & &  rank(\X) \\
\nonumber & \textrm{subject to   } &  & |\y_i|^2=trace(\M_i\X) ~ ~ 0 \leq i \leq n-1\\
& & & X \succcurlyeq 0
\end{align}
This is a non-convex problem as rank is a non-convex function. It has been shown in \cite{fazel1} that trace minimization is the tightest convex relaxation of rank minimization for positive semidefinite matrices. This relaxation is not useful in the phase retrieval setup as $trace(\X)$ corresponds to the energy of the signal $\x$, which is fixed by the magnitude of the Fourier transform. \cite{fazel2} proposes log-determinant function as a surrogate for rank in such problems, i.e.,
\begin{align}
\label{logdet}
&\textrm{minimize} \quad \quad  \textrm{log det }\mathbf{(X+\epsilon I)}  \\
& \textrm{subject to   } \quad  ~  |\y_i|^2=trace(\M_i\X) ~ ~ 0 \leq i \leq n-1\\
&  \quad \quad \quad \quad \quad ~~ X \succcurlyeq 0
\end{align}
This heuristic tries to minimize a concave function in a convex domain, which can be done using gradient descent approach. This method was explored for the phase retrieval setup in \cite{candespr, eldar}. Simulations suggest that the algorithm converges to a rank 1 solution with high probability if the underlying signal is $o(\sqrt{n})$ sparse.

In \cite{kishore}, we explored a two-stage recovery process to provably solve (\ref{PR}). In the first stage, we use information about the support of the autocorrelation to recover the support of the signal (see \cite{kishore3}). In the second stage, we solve the SDP with known support \cite{kishore2}. It was empirically observed and theoretically shown that signals were recovered with arbitrarily high probability if the sparsity was $o(\sqrt{n})$. However, if the support information was available by other means, it was observed that the program recovered signals up to roughly $o(n)$ sparsity.

\section{Recovery Algorithm}

In this section, we develop an iterative algorithm based on reweighted $l_1$-minimization to solve the phase retrieval problem outside the $o(\sqrt{n})$ sparsity regime. 

A two-stage approach like \cite{kishore} wouldn't work as the support of the autocorrelation becomes full if the signal has sparsity greater than $O(\sqrt{nlog(n)})$. Trace minimization in the phase retrieval setup has two issues: trivial ambiguities have same objective function, trace is fixed because of which we will be solving a feasibility problem only. Weighted $l_1$ minimization (\ref{WL1}) intuitively overcomes these issues and promotes sparse solutions.

\begin{align}
\label{WL1}
\nonumber &\textrm{minimize} & &  trace(\V|\X|) \\
\nonumber & \textrm{subject to   } &  & |\y_i|^2=trace(\M_i\X) ~ ~ 0 \leq i \leq n-1\\
& & & X \succcurlyeq 0
\end{align}
where $\V$ is a weight matrix, which can be designed to promote the necessary structure in the solution.

Simulations suggest that (\ref{WL1}) has rank 1 solutions with high probability if the sparsity is $o(\sqrt{n})$ and $\V$ is chosen randomly, but fails outside the $o(\sqrt{n})$ region. However, the largest eigenvalue turns out to be considerably stronger than the other eigenvalues, and the eigenvector corresponding to it happens to contain a lot of information about the support of the signal even significantly outside the $o(\sqrt{n})$ region, which is not very surprising.

This strongly suggests the possibility of an iterative algorithm, which at every iteration also knows "a lot"  about where the signal's non-zero entries can be. Algorithm 1 uses this information by doing a re-weighted minimization at every iteration. The weights corresponding to prospective support locations are set to zero to encourage the signal to choose those locations in the next iteration, and the weights outside this region is chosen to be positive. 

\begin{algorithm}
\label{algo1}
\caption{Phase Retrieval Algorithm}
\textbf{Input:} The magnitude of the Fourier transform $|\y|$, maximum number of iterations \\
\textbf{Output:} The underlying signal $\x$ \\
\begin{enumerate}[1.]
\item Initialize $\V$ by by choosing its entries from $[0,1]$ uniformly at random 
\item Solve the optimization problem
\begin{align}
\nonumber &\textrm{minimize}&  & trace(\V|\X|) \\
\nonumber & \textrm{subject to   }  & & |\y_i|^2=trace(\M_i\X) ~ ~ 0 \leq i \leq n-1\\
& & & X \succcurlyeq 0
\end{align}
\item If $rank(\X)=1$, return $\X$, else calculate $\X_1=\x_1\x_1^T$, where $\X_1$ is the best rank-1 approximation of $\X$
\item Update $\V$ as follows: $\V_{ij}=0$ if $|\x_i|$ and $|\x_j|$ are greater than a certain threshold, choose the remaining entries from $[0,1]$ uniformly at random
\item Iterate until convergence or maximum number of iterations
\item Calculate $\X_1$ and return $\x_1$
\end{enumerate}
\end{algorithm}

\section{Phase Retrieval as a compression problem}

Theorem \ref{thm1} was proved for the phase retrieval problem in \cite{kishore3, kishore2}.
\begin{thm} \label{thm1}
Signals can be recovered from their power spectral densities up to time-shift, reversal and global sign, in polynomial time, with probability $1-\delta$ for any $\delta>0$  if
\begin{enumerate}
\item $k=o(\sqrt{n})$
\item $k$ entries are chosen uniformly at random
\item $n>n(\delta)$
\end{enumerate}
\end{thm}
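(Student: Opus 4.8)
The plan is to reduce the power-spectral-density (equivalently, autocorrelation) recovery problem to a uniqueness statement plus a tractable recovery procedure, and to exploit the observation already used in \cite{kishore} that when the autocorrelation support does \emph{not} fill up, the support of $\x$ can be recovered first and the remaining step becomes a small, well-conditioned problem. Concretely, write $\mathbf{a}$ for the autocorrelation of $\x$, and let $\Sc\subseteq\{0,\dots,n-1\}$ be the support of $\x$, so $|\Sc|=k$. The difference set $\Sc-\Sc=\{s-s' : s,s'\in\Sc\}$ determines exactly the support of $\mathbf{a}$ (as a two-sided sequence). The first step is a combinatorial lemma: if $k$ positions are chosen uniformly at random in $\{0,\dots,n-1\}$ and $k=o(\sqrt n)$, then with probability $1-\delta$ (for $n>n(\delta)$) all the pairwise differences $s-s'$, $s\neq s'$, are \emph{distinct}, i.e.\ $\Sc$ is a Sidon set (a $B_2$ set). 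This is the classical birthday-type estimate: there are $\binom{k}{2}$ unordered pairs and $\order{k^4}$ potential coincidences among differences, each occurring with probability $\order{1/n}$, so the expected number of collisions is $\order{k^4/n}=o(1)$ when $k=o(\sqrt n)$; a union bound (or second-moment argument) then gives the high-probability claim, and choosing $n(\delta)$ large makes the failure probability at most $\delta$.

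Given that $\Sc$ is Sidon, every nonzero lag $i$ in the support of $\mathbf{a}$ corresponds to a \emph{unique} pair $(s,s')$ with $s-s'=i$, and hence $a_i = x_s x_{s'}$ is a single product rather than a sum. The second step uses this to recover the support $\Sc$ itself from the support of $\mathbf{a}$: the set of lags $D=\{i : a_i\neq 0\}\cup\{0\}$ is a difference set of a $k$-set, and one can reconstruct $\Sc$ (up to translation and reflection, which are exactly the allowed ambiguities) by the standard ``turnpike''/beltway reconstruction — and in the Sidon regime this is even easier, since the largest lag gives the diameter of $\Sc$, and the remaining elements are pinned down greedily from the sorted lags with no branching. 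The third step recovers the \emph{values}: fix the sign of $x_{s_0}$ for the smallest element $s_0\in\Sc$ (this is the global-sign ambiguity). From $a_0=\sum_s x_s^2$ and the single-product relations $a_{s-s'}=x_sx_{s'}$ we can solve for $|x_s|$ for all $s$ (take any spanning set of pairs; e.g.\ $|x_s| = |a_{s-s_0}| / |a_{s_1-s_0}|\cdot\dots$ along a path, then normalize using $a_0$), and the signs propagate along the same pairs from $\sg(x_{s_0})$ since $\sg(x_sx_{s'})=\sg(a_{s-s'})$ is known. All of this is $\mathrm{poly}(n)$ — in fact $\mathrm{poly}(k)+\order{n\log n}$ — so the polynomial-time claim is immediate.

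The main obstacle is making the support-recovery step (step two) fully rigorous \emph{and} showing it is robust to the genuine ambiguities without accidentally identifying more than translation/reflection/sign: one must argue that a Sidon set is determined by its difference multiset up to exactly $\{$translation, reflection$\}$, which is true but needs a short proof (reconstruct from the extreme lags inward and check consistency), and one must handle the edge cases where the random support happens not to be Sidon — these are absorbed into the $\delta$ failure probability via step one. A secondary technical point is the value-recovery step when the difference-set ``path graph'' on $\Sc$ is not obviously connected; but with $\Sc$ Sidon and containing $s_0$, the lags $\{s-s_0 : s\in\Sc\setminus\{s_0\}\}$ are all present in the support of $\mathbf{a}$, so the star centered at $s_0$ already connects everything, and connectivity is never in question. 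I would therefore spend most of the write-up on the Sidon-set probability bound and the ``reconstruction up to translation/reflection'' uniqueness claim, and treat the value propagation as a routine consequence.
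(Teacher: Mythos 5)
There is a genuine gap, and it is quantitative rather than structural: your entire argument rests on the claim that a uniformly random $k$-subset of $\{0,\dots,n-1\}$ is a Sidon set w.h.p.\ whenever $k=o(\sqrt n)$, but the birthday computation you sketch does not support this. The number of pairs of differences that could collide is $\Theta(k^4)$, each collision has probability $\Theta(1/n)$, so the expected number of repeated differences is $\Theta(k^4/n)$ --- and this is $o(1)$ only when $k=o(n^{1/4})$. For, say, $k=n^{1/3}=o(\sqrt n)$ the expectation is $n^{1/3}\to\infty$, and in fact the support will have many repeated differences with overwhelming probability. In the regime $n^{1/4}\ll k\ll n^{1/2}$ your steps two and three both break: a nonzero lag $i$ of the autocorrelation is then generically a \emph{sum} $a_i=\sum_{s-s'=i}x_sx_{s'}$ over several pairs, so you cannot read off individual products $x_sx_{s'}$, the sign/magnitude propagation along the star at $s_0$ fails, and even the support-recovery step can no longer assume each lag pins down a unique pair. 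So your proposal proves the theorem only for $k=o(n^{1/4})$, which is strictly weaker than the stated $k=o(\sqrt n)$.

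For comparison, the paper does not prove Theorem \ref{thm1} in-text; it is imported from \cite{kishore3,kishore2} via the two-stage route described in Section 3. There, support recovery from the support of the autocorrelation is done by an intersection-type (turnpike-style) algorithm that works for random supports up to $o(\sqrt n)$ \emph{without} requiring all differences to be distinct, and the values are then recovered by solving the SDP restricted to the known support, which handles lags that are sums of several products. If you want to keep your elementary combinatorial flavor, you must either (i) restrict the claim to $k=o(n^{1/4})$, or (ii) replace the Sidon assumption with the weaker structural facts that actually hold for $k=o(\sqrt n)$ (e.g.\ that \emph{enough} lags are realized by unique pairs to pin down the support and to give a connected set of single-product constraints), which is essentially the harder content of \cite{kishore3} and would need a careful second-moment argument rather than the first-moment bound you propose.
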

We can instead fix the sparsity $k$ and consider the phase retrieval problem with partial power spectral density information (for example \cite{moravec}). In particular,  one might have access to only certain frequencies. To solve this problem, we will use some classical results in the compressed sensing literature.

\begin{thm}[Candes, Romberg, Tao, \cite{Candes}]\label{thm2} Let $\F$ be the $n\times n$ DFT matrix. Consider the random matrix $\A$ obtained by choosing $m$ rows of $\F$ uniformly at random (without replacement). If $\x$ is a $k$ sparse vector, $\x$ can be recovered from observations $\A\x$ with arbitrarily high probability if $m\geq O(k\log n)$ via the following $\ell_1$ minimization:
\begin{align}
\label{l1min}
& \min_{\hat\x}\quad \quad \quad \quad \|\hat\x\|_1 \\
& \nonumber \text{subject to} \quad \quad \A\hat\x=\A\x
\end{align}
\end{thm}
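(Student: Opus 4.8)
This is the classical Candes--Romberg--Tao theorem, so the plan is to reproduce the standard \emph{dual certificate} argument. The starting point is the exact-reconstruction (KKT) condition for \eqref{l1min}: if $T$ is the support of $\x$, then $\x$ is the unique minimizer of $\|\cdot\|_1$ subject to $\A\hat\x=\A\x$ provided (i) the columns of $\A$ indexed by $T$ are linearly independent, and (ii) there is a vector $\pi$ in the row space of $\A$ with $\pi_i=\sg(x_i)$ for $i\in T$ and $|\pi_i|<1$ for $i\notin T$. Since the theorem fixes $\x$ and randomizes over the $m$ sampled frequencies, it suffices to fix $T$ and the sign pattern $\sg(\x_T)$ and to show that such a $\pi$ exists with probability at least $1-\delta$; no union bound over the $\binom{n}{k}$ possible supports is needed, which is precisely why a single $\log n$ factor suffices.

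Next I would write down the canonical candidate certificate. Let $\A_T$ denote the submatrix of columns in $T$, and normalize $\A$ so that $\E[\A^{*}\A]=\Ib$. Set $\pi=\A^{*}\A_T(\A_T^{*}\A_T)^{-1}\sg(\x_T)$. By construction $\pi$ lies in the row space of $\A$ and $\pi_T=\sg(\x_T)$ exactly, so everything reduces to two facts: $\A_T^{*}\A_T$ is invertible and well conditioned, and $\|\pi_{T^c}\|_\infty<1$. Both are concentration statements about random column-submatrices of the DFT, and here the incoherence of $\F$ --- every entry has modulus $1/\sqrt n$ --- is what makes the estimates go through.

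The two estimates are: (a) $\A_T^{*}\A_T$ concentrates around $\Ib_k$, since it is a sum of $m$ i.i.d.\ rank-one matrices built from the sampled rows of $\F$ restricted to $T$, each of spectral norm $\order{k/m}$; a matrix Chernoff/Bernstein inequality (equivalently Rudelson's lemma on random submatrices of a bounded orthonormal system) gives $\|\A_T^{*}\A_T-\Ib_k\|\le 1/2$ once $m\gtrsim k\log k$, whence invertibility and $\|(\A_T^{*}\A_T)^{-1}\|\le 2$; and (b) for each fixed $j\notin T$, the scalar $\pi_j=\A_j^{*}\A_T(\A_T^{*}\A_T)^{-1}\sg(\x_T)$ is, conditionally on $\A_T$, a sum of $m$ bounded, independent, mean-zero terms, so a Hoeffding/Bernstein bound forces $|\pi_j|\ge 1$ with probability at most $\delta/n$; a union bound over the $n-k$ indices $j$ buys the extra $\log n$, giving $m\ge\order{k\log n}$ overall.

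The main obstacle is estimate (a): controlling the least singular value of a random $k$-column submatrix of the DFT. In the original paper this was carried out by a fairly involved computation of the moments of $\A_T^{*}\A_T-\Ib_k$; the modern shortcuts are the noncommutative Khintchine inequality or the matrix Bernstein bound. An alternative that bypasses the certificate altogether is to first show that $\A$ satisfies the restricted isometry property of order $2k$ with constant below $\sqrt2-1$ --- via the Rudelson--Vershynin chaining estimate for random rows of a bounded orthonormal system --- and then invoke the deterministic theorem that RIP at this level implies exact $\ell_1$ recovery; this is cleaner but loses a few logarithmic factors. I would present the certificate route as the main argument and note the RIP route as a robust alternative, the only extra wrinkle being that $\A$ is complex while $\x$ is real, which changes nothing essential in either approach.
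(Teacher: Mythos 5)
The paper does not actually prove this statement: Theorem~\ref{thm2} is quoted from \cite{Candes} and used as a black box in the proof of Theorem~\ref{thm3}, so there is no internal proof to compare against. Judged on its own, your outline has the right architecture --- the exact-recovery (dual certificate) condition, the least-squares certificate $\pi=\A^{*}\A_T(\A_T^{*}\A_T)^{-1}\sg(\x_T)$, concentration of $\A_T^{*}\A_T$ about $\Ib_k$, and a union bound over $j\notin T$ --- and this is indeed how \cite{Candes} proceeds.

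There is, however, a genuine gap in your estimate (b). You claim that, conditionally on $\A_T$, the scalar $\pi_j=\A_j^{*}\A_T(\A_T^{*}\A_T)^{-1}\sg(\x_T)$ is a sum of $m$ independent mean-zero terms. That conditioning trick works for Gaussian measurement matrices, where distinct columns are independent, but it fails for the partial DFT: the column $\A_j$ and the block $\A_T$ are both deterministic functions of the \emph{same} random frequency set $\Omega$, so conditioning on $\A_T$ essentially pins down $\Omega$ and leaves $\pi_j$ with no usable residual randomness. This is exactly why the argument in \cite{Candes} is long: one expands $(\A_T^{*}\A_T)^{-1}=\sum_{r\geq 0}(\Ib-\A_T^{*}\A_T)^{r}$ as a Neumann series and bounds high moments of $\A_j^{*}\A_T(\Ib-\A_T^{*}\A_T)^{r}\sg(\x_T)$ directly over the randomness of $\Omega$, with combinatorial bookkeeping for the dependencies; the ``involved moment computation'' you attribute only to the least-singular-value bound is in fact the heart of the off-support bound as well. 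Two smaller points: the rows are sampled without replacement, so the ``$m$ i.i.d.\ rank-one summands'' picture needs the standard reduction to Bernoulli or with-replacement sampling; and the RIP detour you offer as an alternative is known to cost extra logarithmic factors for partial Fourier matrices, so it would not deliver the stated $\order{k\log n}$.
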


The following theorem gives a useful result for the recovery of a sparse signal from its partial power spectral density.

\begin{thm} \label{thm3}Let $\x$ be a $k$ sparse vector satisfying the conditions of Theorem \ref{thm1}. Suppose we observe its power spectral density at $m$ distinct frequencies chosen uniformly at random (without replacement). If $m\geq O(k^2\log n)$, $\x$ can be recovered from its partial power spectral density, in polynomial time, with arbitrarily high probability.
\end{thm}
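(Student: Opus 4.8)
The plan is to reduce the partial phase retrieval problem to two problems we can already solve: (i) recover the \emph{full} autocorrelation $\ab$ from a random subsample of its Fourier transform using compressed sensing (Theorem \ref{thm2}), and then (ii) recover $\x$ from the full autocorrelation using Theorem \ref{thm1}. The glue between the two steps is the elementary observation, already noted in Section 2, that the power spectral density is the Fourier transform of the autocorrelation.

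First I would make the measurement model explicit. Since $\x$ is real, the two-sided autocorrelation $\tilde\ab$ (with $\tilde a_i = a_{|i|}$, $a_i=\sum_j x_j x_{j+i}$) is a real vector of length $2n-1$ whose DFT equals the power spectral density of $\x$ (up to the standard normalization). Hence observing the PSD of $\x$ at $m$ frequencies drawn uniformly at random is exactly observing $\A\tilde\ab$, where $\A$ is the $m\times(2n-1)$ matrix formed by choosing $m$ rows of the $(2n-1)$-point DFT matrix uniformly at random — precisely the sampling model of Theorem \ref{thm2} (the change of ambient dimension from $n$ to $2n-1$ only affects constants, since $\log(2n-1)=\order{\log n}$).

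Second, the key structural fact: if $\x$ has support $\mathcal{S}$ with $|\mathcal{S}|=k$, then $\tilde a_i \neq 0$ only when $i$ is a difference of two elements of $\mathcal{S}$, so $\tilde\ab$ has at most $k(k-1)+1=\order{k^2}$ nonzero entries; coincidences among the pairwise differences only shrink the support further, so the $\order{k^2}$ bound always holds. Applying Theorem \ref{thm2} to $\tilde\ab$, the $\ell_1$ program $\min\|\hat\ab\|_1$ subject to $\A\hat\ab=\A\tilde\ab$ recovers $\tilde\ab$ exactly, in polynomial time (it is a linear program), with probability at least $1-\delta/2$, provided $m\geq\order{k^2\log n}$. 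Having recovered $\tilde\ab$ we have the complete autocorrelation, equivalently the full power spectral density $\A$-independently; since $\x$ by hypothesis satisfies the conditions of Theorem \ref{thm1}, that theorem then recovers $\x$ up to time-shift, reversal and global sign, in polynomial time, with probability at least $1-\delta/2$. A union bound over the two stages gives overall success probability at least $1-\delta$, and $\delta>0$ is arbitrary.

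The step needing the most care is the probabilistic bookkeeping: Theorem \ref{thm2} uses randomness over \emph{which frequencies are sampled}, while Theorem \ref{thm1} uses randomness over the \emph{support of $\x$}; these two sources are independent, so conditioning on the event that stage (i) succeeds and then invoking stage (ii) is legitimate and the union bound is clean. A minor technical point is matching the ``$m$ random frequencies'' in the PSD model to the row-subsampling model of Theorem \ref{thm2} (e.g. sampling on the $(2n-1)$-point grid versus the $n$-point grid), which is a constant-factor issue absorbed into $\order{\cdot}$; and one should check that exact recovery of $\tilde\ab$, rather than approximate recovery, is what Theorem \ref{thm2} provides for a genuinely sparse vector, which it does.
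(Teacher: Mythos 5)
Your proposal is correct and follows essentially the same route as the paper's own proof: observe that the autocorrelation of a $k$-sparse signal is $\order{k^2}$-sparse, recover it from the randomly subsampled power spectral density via Theorem \ref{thm2}, and then invoke Theorem \ref{thm1}. The extra bookkeeping you supply (the $2n-1$-point grid, the union bound over the two independent sources of randomness) is a faithful elaboration of details the paper leaves implicit.
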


\begin{proof}
If $\x$ is a $k$-sparse signal, its autocorrelation can have at most $k^2$ non-zero entries. Since power spectral density is the Fourier transform of the autocorrelation, using Theorem \ref{thm2}, whenever $m\geq O(k^2\log n)$, the autocorrelation of $\x$ can be recovered from partial power spectral density observations via $\ell_1$ minimization with arbitrarily high probability. Now that the autocorrelation is found, Theorem \ref{thm1} completes the proof.
\end{proof}

While Theorem \ref{thm3}, gives a tractable algorithm for the recovery of $\mathbf{a}$ from partial power spectral density observations $\s=\A\ab$, one can consider a more sophisticated approach . The program (\ref{l1min}) tries to solve for a $k^2$-sparse solution without using the extra information that the resulting signal should be a valid autocorrelation. We propose the following method which might be of interest for future directions.
\begin{align}
&\min_{\ab} \quad \quad \quad \quad \quad \|\ab\|_1 \label{abc2} \\
\nonumber &\text{subject to} \quad \quad \quad \A\ab=\s\\
\nonumber & \quad \quad \quad \quad \quad \quad \quad \sum_{j=1}^n \X_{j,j+1}=a_i\quad 0\leq i\leq n-1  \\
 & \quad \quad \quad \quad \quad \quad \quad rank(\X)=1 ~ ~ ~\& ~ ~ ~ \X\succeq 0 \label{abc}
\end{align}

Observe that the only nonconvex constraint in \eqref{abc2} is \eqref{abc}. We believe that solving \eqref{abc2} can substantially increase the performance and instead of $O(k^2\log n)$ measurements, just $O(k\log n)$ measurements might suffice which is similar to that of typical compressed recovery of a $k$ sparse vector. However, further investigation is required to relax \eqref{abc} in a useful manner as opposed to dropping it.

Overall, Theorem \ref{thm3} is subject to a $o(\sqrt{n})$-sparsity bottleneck since the full power spectral density corresponds to $n$ measurements. While we do not provide theoretical guarantees for Algorithm 1, when full power spectral density is available, our algorithm seemingly beats the $o(\sqrt{n})$-sparsity bottleneck. In section \ref{numerical}, we see that the recoverable sparsity is much higher than $o(\sqrt{n})$.

\subsection{Two-stage recovery}
In \cite{kishore}, we try to solve the sparse phase retrieval problem via a two-stage approach where the first step involves finding the support of the signal from the support of the autocorrelation. We will now argue that such an approach is inherently subject to the $o(\sqrt{n})$ bottleneck, as there is no way of finding the support of the signal when its sparsity is greater than $O(\sqrt{n})$.
\begin{lem} Suppose $\x$ is a $k$-sparse signal whose support is chosen uniformly at random, and whose nonzero entries are continuous i.i.d. random variables. Then, there exists a constant $c$ such that whenever $k\geq c\sqrt{nlog(n)}$, support of the autocorrelation is full with arbitrarily high probability.
\end{lem}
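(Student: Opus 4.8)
\emph{Reduction.} The plan is to reduce the statement to a combinatorial claim about the difference set of the support, and then establish that claim by a union bound over lags. First I would condition on the support $S\subseteq\{0,\dots,n-1\}$, $|S|=k$, chosen uniformly at random. Writing indices modulo $n$, the circular autocorrelation at lag $i$ is $a_i=\sum_{j\in S:\,j+i\in S}x_j x_{j+i}$, a polynomial in the nonzero entries $\{x_j\}_{j\in S}$ all of whose coefficients are positive; hence it is identically zero exactly when $S\cap(S+i)=\emptyset$, and otherwise it is a nonzero polynomial which, the entries being continuous i.i.d.\ random variables, vanishes with probability $0$. Since $a_0=\|\x\|_2^2>0$ always, it suffices to show that with probability tending to $1$ over the choice of $S$ we have $S\cap(S+i)\neq\emptyset$ for every $i\in\{1,\dots,n-1\}$; equivalently, that the difference set of $S$ modulo $n$ is all of $\mathbb{Z}_n$.

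\emph{A single lag.} Next I would fix $i\neq 0$ and bound $\Pro\big(S\cap(S+i)=\emptyset\big)$. This event says precisely that $S$ is an independent set of the circulant graph $G_i$ on $\mathbb{Z}_n$ with edges $\{v,v+i\}$, which is a disjoint union of $\gcd(i,n)$ cycles of total length $n$; taking a maximum matching in each component produces a set $M_i$ of $m_i\ge n/3$ pairwise-disjoint edges, and this bound is uniform in $i$. If $S$ is independent in $G_i$ it contains at most one endpoint of each edge of $M_i$. Letting $T$ be the number of matched vertices landing in $S$ — a hypergeometric variable with mean $2m_i k/n\ge 2k/3$ — and using that conditionally on $|T|=t$ a uniform $t$-subset of the $2m_i$ matched vertices misses all $m_i$ pairs with probability $2^t\binom{m_i}{t}/\binom{2m_i}{t}=\prod_{s=0}^{t-1}\frac{2(m_i-s)}{2m_i-s}\le\exp\!\big(-t(t-1)/4m_i\big)$, together with a Chernoff bound ensuring $T\ge k/3$ except with probability $e^{-\Omega(k)}$ and with $m_i\le n/2$, I would conclude $\Pro\big(S\cap(S+i)=\emptyset\big)\le e^{-\Omega(k)}+e^{-\Omega(k^2/n)}$.

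\emph{Union bound and conclusion.} Summing over the at most $n-1$ lags gives $\Pro\big(\exists\, i\neq 0:\ S\cap(S+i)=\emptyset\big)\le n\,e^{-\Omega(k)}+n\,e^{-\Omega(k^2/n)}$, which tends to $0$ as soon as $k\ge c\sqrt{n\log n}$ for a sufficiently large absolute constant $c$ (chosen so that $k^2/n$ exceeds $\log n$ by a constant factor larger than $1$, which kills the $n\,e^{-\Omega(k^2/n)}$ term; the $n\,e^{-\Omega(k)}$ term is negligible for such $k$). Combined with the reduction step, the autocorrelation support is then full with probability approaching $1$, which is what the lemma asserts.

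\emph{Main obstacle.} I expect the technical heart to be the single-lag estimate made uniform in $i$: the component structure of $G_i$ depends heavily on $\gcd(i,n)$, so one needs the lower bound on the matching size $m_i$ to hold for every $i\neq 0$, and one must work with fixed-cardinality (hypergeometric) sampling of $S$ rather than independent coordinate inclusions when carrying out the concentration argument. By contrast, the genericity step in the reduction — that a nonzero polynomial in continuous random variables is almost surely nonzero — is routine.
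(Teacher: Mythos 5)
Your proof is correct and follows essentially the same route as the paper's: reduce to showing that the difference set of the support covers every lag, bound the single-lag failure probability via disjoint pairs at distance $i$, and union bound over the $n$ lags. The only difference is one of rigor — the paper passes to the independent Bernoulli($k/n$) model ``without loss of generality'' and asserts the bound $(1-k^2/n^2)^{n/2}\le e^{-k^2/2n}$ directly, whereas you carry out the concentration argument in the exact-$k$ (hypergeometric) model and track the matching size $m_i\ge n/3$ uniformly over lags, details the paper glosses over.
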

\begin{proof}
Without loss of generality, we can assume that each location belongs to the support of the signal with probability $\frac{k}{n}$ independently as the same proof will apply for $k$-sparse signals with standard modifications.

For a particular distance $d$, if no two non-zero entries in the signal are separated by a distance $d$, we can say that $d$ doesn't belong to the support of the autocorrelation. This probability can be bounded by $(1-k^2/n^2)^{n/2}$ which is upper bounded by $e^{-k^2/2n}$. Union bound tells us that the probability of the support of the autocorrelation not being full is less than $ne^{-k^2/2n}$, which goes to zero if $k\geq c\sqrt{nlog(n)}$ for sufficiently large $n$.
\end{proof}

\subsection{Relation to Gaussian Phase Retrieval}
Our results on partial power spectral density can be related to the "generalized" phase retrieval problem, where the observations are of the form $|\g_i^T\x|$ for i.i.d. complex standard normal vectors $\{\g_i\}_{i=1}^m$. While this problem is structurally similar to phase retrieval, it is considerably simpler as there are no trivial ambiguities like time-shift and flipping.

Assuming $\x$ is a sparse vector, \cite{Samet} and \cite{LiVor} analyze the following semidefinite program for the tractable recovery of $\x$ up to a global phase ambiguity.
\begin{align}
&\min_{\hat\X} ~ ~ ~ ~ ~ ~ ~ ~ ~ ~\|\hat\X\|_\star+\lambda \|\hat\X\|_1\label{sdp1}\\
\nonumber &\text{subject to} ~ ~ ~ \li\g_i\g_i^T,\hat\X\ri=\li\g_i\g_i^T,\x\x^T\ri ~ ~ ~ 1\leq i\leq m
\end{align}
Naturally, one would wish that the unique minimizer of \eqref{sdp1} to be $\x\x^T$ to be able to recover $\x$ up to phase ambiguity. \eqref{sdp1} tries to capture both sparsity and low rankness of the underlying matrix $\x\x^T$. Interestingly, both \cite{Samet} and \cite{LiVor} suggest that as long as $m\leq O(\min\{k^2,n\})$, recovery using (\ref{sdp1}) is not possible with very high probability for any choice of regularizer $\lambda$. To summarize, even if $k^2\ll n$, one would still need $\Omega(k^2)$ measurements for recovery, which indicates an $o(\sqrt{n})$ bottleneck for generic Gaussian measurements too.

On the other hand, \cite{Samet}, \cite{LiVor} and \cite{Eldar2} show that, if one is able to search for a low rank and sparse matrix, then, $\x\x^T$ can be recovered with only $k\log \frac{n}{k}$, measurements which illustrates a significant performance gap between tractable recovery and intractable combinatorial search. Interestingly, \cite{kishore} illustrates a similar gap for the phase retrieval problem with Fourier measurements.

Overall, we have seen that for the two important class of phaseless measurements (Fourier and Gaussian), the current theoretical guarantees for the existing algorithms are subject to a strong $o(\sqrt{n})$-sparsity bottleneck. A natural question is whether it is possible at all to do sparse phase retrieval in a tractable way with small number of measurements.

\subsection{Combinatorial Approach}
%From simulation results, it is not very clear how much measurements are required to guarantee success of JBPM for the particular case of phase retrieval problem. Assume $\x\in\Cb^n$ is an arbitrary $k$ sparse vector and $\{\g_i\}_{i=1}^m$ are vectors with i.i.d. complex standard normal entries. We observe:
%\beq
%y_i=|\g_i^*\x|^2=\g_i^*\x\x^*\g_i=\li\g_i\g_i^*,\X\ri
%\eeq
%and we wish to recover $\X$. In particular, it is not hard to argue that $\ell_1$ minimization by itself will yield $O(k^2\log^2n)$ measurements and applying only trace minimization with positivity constraint on $\X$ will yield $O(n\log n)$ measurements as it has been shown in \cite{Can4}. Observe that $k^2$ corresponds to the degree of freedom (DoF) of a $k^2$ sparse matrix and $n$ corresponds to the DoF of a rank $1$ positive semidefinite matrix. However when $\x$ is sparse both sparsity and rank $1$ constraints on $\X$ can be applied simultaneously which leads us to expect a DoF of as small as $k$.\\
%With this motivation,
To address this question, we consider the problem of recovering $\x$ from phaseless measurements by making use of a specific choice of measurements. In particular, we show that one can tractably recover a $k$-sparse signal from phaseless measurements by using only $O(k\log n)$ measurements with very high probability. This shows that phase retrieval with optimal number of measurements (up to a $\log n$ factor) is in fact possible.
\begin{thm}\label{phaselift} Suppose $\x$ is an arbitrary $k$-sparse vector where $k\geq 2$. Let $\{\z_i\}_{i=1}^m$ be i.i.d. vectors with i.i.d. entries distributed as:
\beq
\begin{cases}0~~~~~~~~~~~~\text{ with probability}~~1-1/k \\ \mathcal{N}(0,1) ~ ~~~~\textrm{with probability} ~~1/k\end{cases}\label{measdesign}
\eeq
Let $\{\ab_i,\bb_i\}_{i=1}^m$ be i.i.d. vectors with i.i.d. entries distributed as $\exp(\ib\theta)$ where $\theta$ is uniformly distributed in $[0,2\pi)$. Denote the function $\Rb^n\times\Rb^n\rightarrow\Rb^n$ that returns entrywise products of two vectors by $\cdot$. Assume, we observe the measurements:
\beq
|\li\ab_i\cdot \z_i,\x\ri|^2,|\li\bb_i\cdot \z_i,\x\ri|^2
\eeq
for $1\leq i\leq m$. 

Then, $\x\x^*$ can be recovered with high probability in $O(mn)$ time, whenever $m\geq ck\log n$ for some constant $c>0$.
\end{thm}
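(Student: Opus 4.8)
The plan is to exploit the extreme sparsity of the design vectors $\z_i$. Write $S=\mathrm{supp}(\x)$ and assume $|S|=k$ (this exact-sparsity normalization is used in Stage~B). Since $\langle\ab_i\cdot\z_i,\x\rangle=\sum_{j\in S\cap\mathrm{supp}(\z_i)}a_{i,j}z_{i,j}x_j$, the $i$-th measurement only ``sees'' the coordinates of $\x$ lying in $S\cap\mathrm{supp}(\z_i)$, and $|S\cap\mathrm{supp}(\z_i)|\sim\mathrm{Binomial}(k,1/k)$. When this intersection is a single coordinate $\{j\}$ (a \emph{singleton hit}), the phaseless measurement equals $|z_{i,j}|^2|x_j|^2$ \emph{exactly}, so dividing by the known $|z_{i,j}|^2$ recovers $|x_j|^2$; when it is a pair $\{j,l\}$ (a \emph{pair hit}), the two measurements $|\langle\ab_i\cdot\z_i,\x\rangle|^2$ and $|\langle\bb_i\cdot\z_i,\x\rangle|^2$ sharing the same $\z_i$ reveal the relative phase $\arg(x_j\overline{x_l})$. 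So I would run the algorithm in two stages: (A) determine $S$ and all $|x_j|$ from singleton hits; (B) determine all relative phases from pair hits, stitched together by a connectivity argument on a random graph on $S$. Since $\x\x^*$ is fixed by $S$, the magnitudes, and the relative phases --- and is invariant to the global phase of $\x$ --- this recovers $\x\x^*$ exactly.

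\textbf{Stage A.} For every coordinate $j$ and every $i$ with $j\in\mathrm{supp}(\z_i)$, form the candidate $c_{ij}:=|\langle\ab_i\cdot\z_i,\x\rangle|^2/|z_{i,j}|^2$. If $j\in S$, then for a $(1-1/k)^{k-1}\ge e^{-1}$ fraction of such $i$ --- those in which $\z_i$ misses $S\setminus\{j\}$ --- we have $c_{ij}=|x_j|^2$, while for every other such $i$ the value $c_{ij}$ is a nonatomic random variable which (almost surely) is distinct across measurements and never $0$. If $j\notin S$, then $c_{ij}=0$ exactly when $\z_i$ misses $S$ (probability $(1-1/k)^k\approx e^{-1}$) and otherwise is again a distinct nonatomic value. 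Consequently $j\in S$ if and only if some \emph{nonzero} value occurs among the $c_{ij}$ at least $\tau$ times for a threshold $\tau$ of order $\log n$, and that value is then $|x_j|^2$. A Chernoff bound shows that for each $j\in S$ the number of singleton hits at $j$ is $\ge\tfrac{1}{2}e^{-1}(m/k)\gtrsim\log n$, and a union bound over the $n$ coordinates (together with the almost-sure absence of spurious coincidences among the nonatomic candidates) makes this test correct with the required probability; this is the step that consumes the $\log n$ factor in $m\ge ck\log n$.

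\textbf{Stage B.} Knowing $S$, the algorithm collects every $i$ with $|S\cap\mathrm{supp}(\z_i)|=2$ and forms a graph $G$ on vertex set $S$ with one edge per such $i$. A given measurement is a pair hit with probability $\binom{k}{2}k^{-2}(1-1/k)^{k-2}$, which is bounded below by an absolute constant $c_0>0$ for all $k\ge2$, and conditioned on being a pair hit the pair is uniform over the $\binom{k}{2}$ possibilities; hence $G$ is a random graph on $k$ vertices with $\Omega(m)=\Omega(k\log n)$ i.i.d.\ uniform edges, comfortably above the $\Theta(k\log k)$ connectivity threshold, so $G$ is connected with high probability (drawing pairs with replacement is harmless since $k\log n\ll\binom{k}{2}$ for large $k$, and small $k$ is trivial). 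For an edge $\{j,l\}$ arising from measurement $i$, expanding the square gives
\[
|\langle\ab_i\cdot\z_i,\x\rangle|^2-|z_{i,j}x_j|^2-|z_{i,l}x_l|^2=2\,|z_{i,j}z_{i,l}x_jx_l|\,\cos\bigl(\theta_i+\arg(x_j\overline{x_l})\bigr),
\]
where $\theta_i=\arg(\overline{a_{i,j}}a_{i,l})+\arg(\overline{z_{i,j}}z_{i,l})$ is known, the coefficient is nonzero, and every magnitude on the right was found in Stage~A; this pins $\arg(x_j\overline{x_l})$ to two values, and the analogous equation from the $\bb_i$-measurement --- whose known phase offset differs from $\theta_i$ by something almost surely not $0$ modulo $\pi$ --- selects the correct one. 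Finally, fix a spanning tree of $G$ rooted at some $j_0\in S$, compose edge phases along tree paths to obtain $\arg(x_{j_0}\overline{x_j})$ for all $j\in S$, and combine with the magnitudes to read off every entry $x_j\overline{x_l}$ of $\x\x^*$ (all other entries being $0$).

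\textbf{Main obstacle and accounting.} The delicate point is Stage~A: the algorithm does not know $S$, so the argument must show that the true magnitude is the \emph{unique} value hit $\Omega(\log n)$ times at a support coordinate while \emph{no} value other than $0$ off the support is hit that often by the ``wrong-intersection'' measurements --- this rests on anticoncentration (continuity) of the measurement law to rule out accidental clustering, a probability-zero event that can be made quantitatively $n^{-\Omega(1)}$ if one insists, and on the constant lower bound $e^{-1}$ for the singleton-hit probability, which is exactly what ties the design parameter $1/k$ to the sparsity. Everything else is routine: Chernoff and union bounds in Stage~A, standard Erd\H{o}s--R\'enyi connectivity in Stage~B (this is what forces $k\log n$ rather than $k$), and elementary trigonometry for the per-edge phase extraction. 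Each pair $(i,j)$ with $j\in\mathrm{supp}(\z_i)$ is touched a constant number of times, after which a graph traversal on $O(m)$ edges finishes the job, for $O(mn)$ time overall; taking the constant $c$ large makes the total failure probability smaller than any prescribed inverse polynomial in $n$.
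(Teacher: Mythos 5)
Your proof is correct and follows the same overall architecture as the paper's (support, then magnitudes, then relative phases via an Erd\H{o}s--R\'enyi connectivity argument and a spanning tree, with the $\ab_i$/$\bb_i$ pair resolving the cosine ambiguity on each edge), but your Stage~A handles support recovery by a genuinely different mechanism. The paper first finds the support alone: it observes that $|\li\cb_i,\x\ri|^2=0$ certifies $S_i\cap S=\emptyset$, collects the set $I$ of such measurements, and shows that $\bigcup_{i\in I}S_i$ covers the \emph{complement} of the support w.h.p.\ (a coupon-collector argument over $\bar S$); only then, with $S$ in hand, does it identify singleton hits deterministically from $S\cap S_i$ and read off the magnitudes. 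You instead skip the zero-measurement trick entirely and recover support and magnitudes in one pass via a repeated-value test on the candidates $c_{ij}$: the true $|x_j|^2$ recurs $\Omega(m/k)$ times at each $j\in S$ while all ``wrong-intersection'' candidates are a.s.\ pairwise distinct and nonzero. Both routes are valid and need $m\gtrsim k\log n$ for the same coupon-collector reason; the paper's version relies only on distinguishing exact zeros from nonzeros, whereas yours leans on the slightly stronger (but still measure-theoretically free) genericity claim that continuous candidates never collide --- neither is robust to noise, and the authors explicitly disclaim robustness. Your Stage~B and the per-edge trigonometric phase extraction match the paper's essentially verbatim, up to an immaterial sign convention in the known phase offset $\theta_i$.
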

{\bf{Remark:}} In general, we don't need the knowledge of sparsity $k$ for the design of the measurement operator. This can be handled by introducing an extra factor of $\log n$ in the required number of measurements, i.e. $klog^2(n)$ measurements. The reason is we can take $ck\log n$ measurements designed for each of the sparsity levels of $k_i=2^i$ for $1\leq i\leq {\lceil{\log n}\rceil}$, i.e. use $k_i$ instead of $k$ in \eqref{measdesign}. Then, $k$ will lie between $k_i$ and $k_{i+1}$ for some $i$ and the same proof argument will work.\\

The proof of Theorem \ref{phaselift} can be found in the appendix. While our proposed recovery algorithm requires a perfectly sparse signal, we emphasize that, our intention with Theorem \ref{phaselift} is demonstrating the possibility of tractable recovery rather than coming up with a robust algorithm for artificially designed measurements \eqref{measdesign}.

\section{Numerical Simulations}
\label{numerical}

In order to demonstrate the performance of Algorithm 1, numerical simulations were performed for different values of signal length $n$ and sparsities $k$. 

For a given $n$ and sparsity $k$, the $k$ support locations were chosen from the $n$ locations uniformly at random. The signal values at the support locations were chosen from [0,1] uniformly at random. The Fourier transform magnitudes of the signal were computed and provided as input to Algorithm 1. If the output signal matched the input signal, it was counted as a success, else it was counted as a failure. Two other semidefinite program based algorithms (CandesPR: \cite{candespr}, HassibiPR: \cite{kishore}) and the traditional alternating projection algorithm (GS: \cite{gerchberg}) were used for comparison purposes.

The results of the numerical simulations are shown in Figure 1. The probability of successful recovery is plotted against various sparsities for $n=32$ and $n=64$. $100$ simulations were performed for each sparsity to get the average success probability, which was calculated as the percentage of successful recovery.  The existing SDP based algorithms start fading at around $o(\sqrt{n})$ sparsity, the figure clearly demonstrates the superiority of using reweighted minimization outside the $o(\sqrt{n})$ sparsity region. 

\begin{figure}[h]
\begin{center}
\includegraphics[scale=0.5]{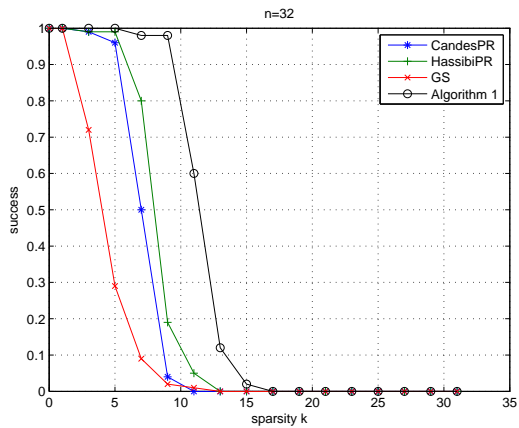}  
\end{center}
\caption{Success rate of recovery for $n=32$ and various sparsities}
\label{fig1}
\end{figure}

\begin{figure}[h]
\begin{center}
\includegraphics[scale=0.5]{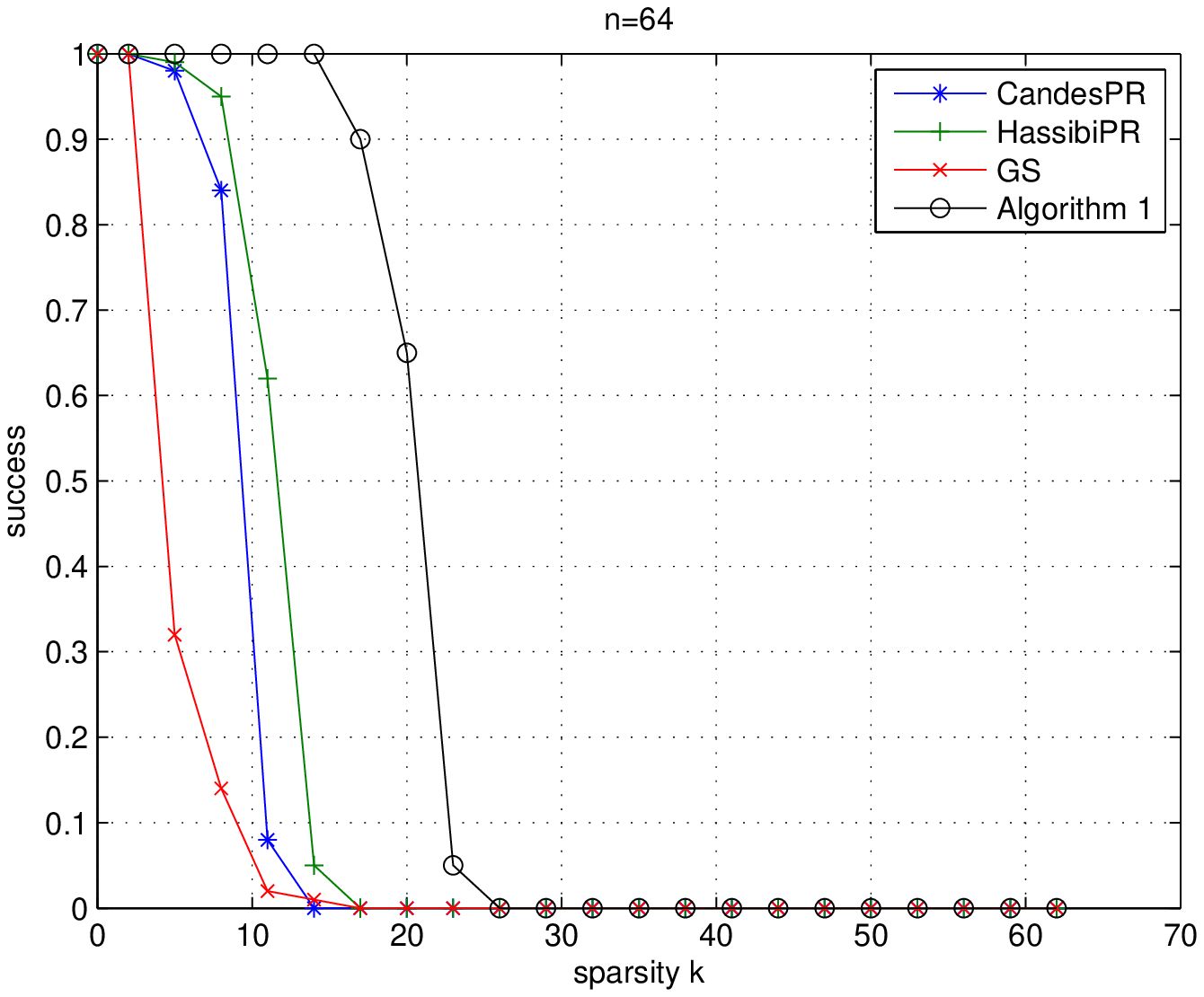}  
\end{center}
\caption{Success rate of recovery for $n=64$ and various sparsities}
\label{fig2}
\end{figure}

\vspace{-0.4cm}

\newpage

\appendix
{\bf{Proof of Theorem \ref{phaselift}}}
\begin{proof} The proof provides a tractable recovery algorithm and consists of two main steps.
\begin{enumerate}[]
\item {\bf{Support recovery:}} Denote the support of $\x$ by $S\subseteq[n]$ and the support of $\z_i$ by $S_i$ and let $\cb_i=\z_i\cdot \ab_i$. Now, consider the inner product $\cb_i^*\x$. Since nonzero entries of $\cb_i$ have continuous distribution if $S\cap S_i\neq \emptyset$, $|\cb_i^*\x|\neq 0$ almost surely. Hence whenever a measurement $|\cb_i^*\x|^2=0$ we can deduce that $S\cap S_i=\emptyset$. Let:
\beq
I=\{1\leq i\leq m\big|S_i\cap S=\emptyset\}
\eeq
Clearly, $\{S_i\}$'s are i.i.d. supports and for each $i$ we have:
\beq
\Pro(S\cap S_i=\emptyset)=(1-1/k)^k\geq \frac{1}{4}\label{14eq}
\eeq
Following \eqref{14eq}, by the law of large numbers, w.h.p. $|I|\geq m/8$. Conditioned on $|I|\geq m/8$, the probability that $j\in\bar{S}$ is not contained in $\bigcup_{i\in I}S_i$ is at most $(1-1/k)^{|I|}\leq (1-1/k)^{m/8}$.

Assuming $m\geq 8k\log n$ and using a union bound:
\begin{align}
\Pro(\bar{S}\not\subseteq\bigcup_{i\in I}S_i)&\leq \exp(-\frac{m}{8k})\leq  n^{-1}
\end{align}
which will approach $0$. Hence, the exact support of the signal can be found by simply taking the union of sets $S_i$ satisfying $\cb_i^*\x=0$ and then complementing it ($O(mn)$ time). Next, with the knowledge of support, we proceed with the recovery of $\x$ up to an overall phase ambiguity.
\item {\bf{Signal recovery:}}
Recovery of the signal given its support will be performed in two steps. We first show that magnitudes of nonzero entries of $\x$ can be found.
\begin{enumerate}[]
\item {\bf{Recovering magnitudes:}} Assume $S_i\cap S$ is a singleton $j\in S$. Since we already have the knowledge of $S$ from previous part, we can immediately deduce: $|x_j|^2=\frac{|\li\cb_i,\x\ri|^2}{|c_{i,j}|^2}$.
Then, we simply need to ensure that w.h.p., for all $j\in S$ there exists $1\leq i\leq m$ satisfying $S\cap S_i=j$. Probability of this not happening for a fixed $j\in S$ is:
\beq
(1-\frac{1}{k}(1-\frac{1}{k})^{k-1})^m\leq \exp(-\frac{m}{4k}).
\eeq
After union bounding over all $j\in S$, whenever $m\geq 8k\log n$, all $|x_j|^2$ can be found with probability at least $1-k\exp(-\frac{m}{4k})\geq 1-n^{-1}$.

\item {\bf{Recovering the relative phases:}} Next, we consider the measurements satisfying $|S_i\cap S|=2$, $1\leq i\leq m$. For fixed $i$ we have:
\beq
\Pro(|S_i\cap S|=2)={k\choose 2}\frac{1}{k^2}(1-\frac{1}{k})^{k-2}\geq \frac{1}{8}.
\eeq
Overall, w.h.p. there are $m/10$ measurements satisfying $|S_i\cap S|=2$. Let:
\beq I=\{1\leq i\leq m |\bigcup |S_i\cap S|=2\}
\eeq
Next, form the $k$ vertex graph obtained by connecting the nodes $j,l$ whenever $\{j,l\}=S_i\cap S$ for some $i\in I$ ($O(mn)$ time). Observe that, each $i\in I$ picks an edge in this graph uniformly at random. Overall, we perform at least $\frac{m}{10}$ picks with replacement. This graph is connected with high probability whenever the chance of an edge being picked is more than $\frac{2\log k}{k}$, \cite{Erdos}. This happens when $m\geq ck\log k$ for $c\geq 10$ since:
\begin{align}
\Pro(\text{edge is picked})&\geq1- (1-\frac{1}{{k\choose 2}})^{m/10}\\
&\geq 1-\exp(-\frac{m}{5k(k-1)})\\
&\geq 1-\exp(-\frac{c\log k}{5k})\approx\frac{c\log k}{5k}
\end{align}
Now, w.h.p. the graph is connected. Find a spanning tree $T$ in this graph, which can be done in $O(k^2)$ time, \cite{Fredman}. Set the phase of the initial node to $0$. Then, the rest of the phases of the nodes are uniquely determined as follows.
\item {\bf{Recovering relative phases:}} To begin, consider an edge of $T$ between nodes $j,l$ where $\{j,l\}=S_i\cap S$ for some $i$. Fix $x_j=|x_j|$ i.e. $0$ phase. From measurements:
\beq
|\li\ab_i\cdot \z_i,\x\ri|^2,|\li\bb_i\cdot \z_i,\x\ri|^2
\eeq
and with the knowledge of $|x_i|^2,|x_j|^2$, we can find: $\Re(z_jz_la_j^*a_lx_jx_l^*)$ and $\Re(z_jz_lb_j^*b_lx_jx_l^*)$ where $\Re(\cdot)$ returns the real part of a number. This information is equivalent to: $\Re(ax), \Re(bx)$ where $a=a_j^*a_l$, $b=b_j^*b_l$ have uniformly random phases and $x=x_jx_l^*$. We will argue that the phase of $x$ is uniquely determined when $\Re(ax), \Re(bx), a,b,|x|$ are known. Assume $a=\exp(\ib\theta_1)$, $b=\exp(\ib \theta_2)$ and $x=|x|\exp(\ib\theta)$. Then,
\begin{align}
&\Re(ax)=|x|\cos(\theta+\theta_1)\\
&\Re(bx)=|x|\cos(\theta+\theta_2)
\end{align}
gives us two linearly independent equations (almost surely) and two unknowns ($\sin \theta,\cos\theta$). Overall, this would yield $\theta$. Applying this over all edges of the tree recursively, we can find the exact phase differences between all neighboring nodes and since the graph is connected we can find the phase of any $j\in S$ by adding up the phase differences over all edges on the path between the initial node and $j$. This can be done for all nodes in $O(k^2)$ time via DFS. Overall, the original signal $\x$ can be found up to an overall phase ambiguity in $O(mn)$ time.
\end{enumerate}
\end{enumerate}
\end{proof}
\end{document}